\pdfminorversion=4    

\documentclass{article}
\usepackage{spconf,amsmath,graphicx,cite}
\usepackage{amsfonts,amssymb}
\usepackage{amsthm}
\usepackage{bm}
\usepackage{float}
\usepackage{tikz,pgfplots}
\usepackage{bm}
\usepackage{cases}
\usepackage{algorithm}
\usepackage{algorithmic}
\usepackage{multirow}
\usepackage{booktabs}
\usepackage{cases}
\usepackage{enumitem}
\usepackage{subcaption}
\usepgfplotslibrary{groupplots}

\usepackage[colorlinks=true,
linkcolor=blue,
urlcolor=blue,
citecolor=blue]{hyperref}  

\usepackage{cleveref}


\makeatletter
\makeatother

\usepackage{shortcuts_OPT}

\newtheorem{Theorem}{Theorem}
\newtheorem{Def}{Definition}

\newtheorem{Corollary}{Corollary}


\newtheorem{assumption}{H\!\!}
\newtheorem{Remark}{Remark}
\crefname{Def}{Definition}{Definition} 

\usepackage{pgfplots}
\usetikzlibrary{arrows,shapes,calc,tikzmark,backgrounds,matrix,decorations.markings}
\usepgfplotslibrary{fillbetween}

\pgfplotsset{compat=1.3}

\usepackage{relsize}
\tikzset{fontscale/.style = {font=\relsize{#1}}
    }

\definecolor{lavander}{cmyk}{0,0.48,0,0}
\definecolor{violet}{cmyk}{0.79,0.88,0,0}
\definecolor{burntorange}{cmyk}{0,0.52,1,0}

\definecolor{asuorange}{rgb}{1,0.699,0.0625}
\definecolor{asured}{rgb}{0.598,0,0.199}
\definecolor{asuborder}{rgb}{0.953,0.484,0}
\definecolor{asugrey}{rgb}{0.309,0.332,0.340}
\definecolor{asublue}{rgb}{0,0.555,0.836}
\definecolor{asugold}{rgb}{1,0.777,0.008}

\DeclareUnicodeCharacter{2212}{−}
\usepgfplotslibrary{groupplots,dateplot}
\usetikzlibrary{patterns,shapes.arrows}
\pgfplotsset{compat=newest}




    \makeatletter
    \def\multilimits@{\bgroup
  \Let@
  \restore@math@cr
  \default@tag
 \baselineskip\fontdimen10 \scriptfont\tw@
 \advance\baselineskip\fontdimen12 \scriptfont\tw@
 \lineskip\thr@@\fontdimen8 \scriptfont\thr@@
 \lineskiplimit\lineskip
 \vbox\bgroup\ialign\bgroup\hfil$\m@th\scriptstyle{##}$\hfil\crcr}
    \def\Sb{_\multilimits@}
    \def\endSb{\crcr\egroup\egroup\egroup}
\makeatother

\makeatletter
\DeclareRobustCommand*\cal{\@fontswitch\relax\mathcal}
\makeatother

\begin{document}
\ninept

\title{On the Stability of Low Pass Graph Filter with A Large Number of Edge Rewires}
\name{Hoang-Son Nguyen, Yiran He, Hoi-To Wai\thanks{This work is supported by CUHK Direct Grant \#4055115. Emails: \texttt{sonngh.00@link.cuhk.edu.hk}, \texttt{\{yrhe,htwai\}@se.cuhk.edu.hk}}}
\address{The Chinese University of Hong Kong, Shatin, Hong Kong SAR of China.}

\maketitle
\begin{abstract}
Recently, the stability of graph filters has been studied as one of the key theoretical properties driving the highly successful graph convolutional neural networks (GCNs). The stability of a graph filter characterizes the effect of topology perturbation on the output of a graph filter, a fundamental building block for GCNs. Many existing results have focused on the regime of small perturbation with a small number of edge rewires. However, the number of edge rewires can be large in many applications. To study the latter case, this work departs from the previous analysis and proves a bound on the stability of graph filter relying on the filter's frequency response. Assuming the graph filter is low pass, we show that the stability of the filter depends on perturbation to the \emph{community structure}. As an application, we show that for stochastic block model graphs, the graph filter distance converges to zero when the number of nodes approaches infinity. Numerical simulations validate our findings.
\end{abstract}
\begin{keywords}
stability of graph filter, low pass graph filter, stochastic block model
\end{keywords}

\section{Introduction}\vspace{-.1cm}
Nowadays, developing models for graph structured data is popular across disciplines \cite{bronstein2017geometric, dong2020graph} with the increased availability of data in domains such as social networks and biological networks. One of the main thrusts is to exploit the irregular relationships between data samples through modeling the latter using graphs that encode arbitrary pairwise relationships. 
Among others, graph convolutional networks (GCNs) \cite{kipf2016semi,defferrard2016convolutional,wu2020comprehensive} are  shown to be efficient in learning graph structured data. 

Despite the highly successful applications, there has been few theoretical studies on \emph{``why GCNs are powerful models for graph structured data?''}. As GCNs are built on graph filters which replace the classical convolution operation in convolutional neural networks, one possible approach is to evaluate the efficacy of GCNs via analyzing the \emph{stability of graph filter} \cite{gama}. Formally, we consider an original graph and its perturbation such that we measure the maximum change in the graph filter output in $\ell_2$ norm when subjected to a unit norm excitation signal. The latter is also known as the graph filter distance. The concept of stability is related to transferability which studies the generalization property of a trained GCN when applied on datasets with different distribution or embedded graph topology.

Among the existing attempts, \cite{gama} provided a comprehensive study on the stability of graph filters utilizing an integral Lipschitz property, while \cite{levie2019transferability, kenlay_icassp20, kenlay_icassp21, kenlay_icml21} have derived bounds surrounding the polynomial graph filter models with normalized Laplacian matrices as graph shift operator (GSO); also see \cite{zou2020graph, gama2019stability, gao2021stability}. As a common trail, many of the above works yielded an upper bound on the {graph filter distance} as ${\cal O}( \| {\bf S} - \hat{\bf S} \|_2 )$ which is proportional to the number of edge rewires, where ${\bf S}, \hat{\bf S}$ are respectively the original and perturbed graph Laplacian/adjacency matrices. As such, they reach the conclusion that the stability of graph filter holds in scenarios when only a few edges are rewired in the event of graph perturbation. When the graph perturbation involves a large number of edge rewiring, the above bounds may become uninformative. This raises an interesting question: \emph{what are the conditions for graph filter to be stable when subjected to a large number of edge rewiring}?

This work provides an affirmative answer to the above question. Our intuition is that as long as the community structure of a graph is stable upon graph perturbation, e.g., only the edges within the same community are rewired, then the graph filter shall be stable regardless of the number of edge rewires since the perturbed graph has similar `shape' as the original one. To confirm this intuition, we depart from existing analysis which rely on properties of polynomial graph filter. Instead, we treat the graph filter distance using analysis in the graph frequency domain.
Assuming that the graph filter is \emph{low pass} \cite{ramakrishna} such that the filter suppresses high graph frequency components of its input, our analysis relates the stability of graph filter to changes in community structure of graph topology. Finally, we yield a graph distance bound that can be \emph{invariant} to the number of edge rewires in the graph perturbation. Our contributions are:\vspace{-.1cm}
\begin{itemize}[leftmargin=4mm, noitemsep]
\item We derive the first \emph{community structure} dependent bound on graph filter distance which accounts for the perturbation to the graph topology in terms of community structure and the low pass filtering capability of graph filter, see Theorem~\ref{thm:1}. 
\item Our bound is specialized to the class of graph perturbation generated from a homogeneous stochastic block model. When the graph filter is sufficiently low pass, we show that the \emph{graph filter distance approaches zero} as the number of nodes grows, see Corollaries~\ref{cor:1} and \ref{cor:2}. We emphasize that our result is applicable to \emph{unnormalized Laplacian} as GSO, which is not covered by most of the previous works on stability of graph filters. \vspace{-.1cm}
\end{itemize}
Lastly, numerical simulations on synthetic and real graphs are provided to verify our findings.\vspace{.1cm}

\noindent \textbf{Related Works}. The community structure of a graph emerges when the number of nodes is large. As such, our study is closely related to works which study the convergence properties of graph filters with very large graphs. For instance, \cite{ruiz2021graphon} studied the generalization of graph signal processing (GSP) \cite{ortega2018graph} to graphons, \cite{ruiz2020graphon,keriven2020convergence} analyzed the stability of GCNs with an infinite number of nodes/data. We remark that our Corollary~\ref{cor:1} can be viewed as an extension of \cite{keriven2020convergence} to the unnormalized Laplacian as GSO.



\vspace{.1cm}
\noindent \textbf{Notations}. We use $||\cdot ||_2$ to denote spectral norm for matrices and Euclidean norm for vectors. For a vector (resp.~matrix) ${\bf x}$ (resp.~${\bf X}$), we use $[ {\bf x} ]_i$ (resp.~$[ {\bf X} ]_{ij}$) to denote its $i$th (resp.~$(i,j)$th) entry.\vspace{-.2cm}

\section{Preliminaries}\vspace{-.2cm} \label{sec:prelim}

\noindent \textbf{Graphs and Graph Signals}. We consider undirected and connected graphs $\mathcal{G} = (\mathcal{V}, \mathcal{E})$, where $\mathcal{V}$ is set of $n$ nodes and $\mathcal{E}$ is the edge set. The graph ${\cal G}$ is endowed with a graph shift operator (GSO), which is defined as a symmetric matrix $\mathbf{S} \in \mathbb{R}^{n \times n}$ such that $[\mathbf{S}]_{ij} \neq 0$ if and only if $i = j$ or $(i, j) \in \mathcal{E}$ \cite{ortega2018graph}. We focus on two common GSOs: unnormalized Laplacian, normalized Laplacian. Given the adjacency matrix $\mathbf{A} \in \mathbb{R}^{n \times n}$ and the degree matrix $\mathbf{D} = {\rm Diag} (\mathbf{A}\mathbf{1})$, the unnormalized Laplacian is defined as $\mathbf{L}_U = \mathbf{D} - \mathbf{A}$ and the normalized Laplacian is defined as $\mathbf{L}_{\sf norm} = \mathbf{D}^{-1/2}\mathbf{L}_U\mathbf{D}^{-1/2}$.
The GSO ${\bf S}$ admits an eigendecomposition as $\mathbf{S} = \mathbf{V}\mathbf{\Lambda}\mathbf{V}^\top$, where $\mathbf{V} = (\mathbf{v}_1, ..., \mathbf{v}_n) \in \mathbb{R}^{n \times n}$ collects the orthonormal eigenvectors and $\mathbf{\Lambda} = {\rm Diag}(\lambda_1, ..., \lambda_n)$ is a diagonal matrix of the eigenvalues.
Throughout this paper, we order the eigenvalues such that $0 = \lambda_1 < \lambda_2 \leq \cdots \leq \lambda_n$.

The eigenvalues of ${\bf S}$ can be interpreted as the \emph{graph frequencies} with the smaller eigenvalues correspond to lower frequencies. This can be understood with the graph total variation ${\bf x}^\top {\bf S} {\bf x}/2 = \sum_{ (i,j) \in {\cal E} } ( x_i - x_j )^2$ for the graph signal on ${\cal G}$ represented by ${\bf x} \in \mathbb{R}^n$. As ${\bf v}_i^\top {\bf S} {\bf v}_i = \lambda_i$, we observe that the eigenvector ${\bf v}_i$ has higher variations with respect to ${\cal G}$ as the frequency index $i$ increases. We define the \emph{graph Fourier transform} (GFT) of ${\bf x}$ as $\widehat{\bf x} = {\bf V}^\top {\bf x}$ such that $[\widehat{\bf x}]_i$ denotes the magnitude of the $i$th frequency component.\vspace{.1cm} 

\noindent \textbf{Low Pass Graph Filters}. A graph filter $\mathcal{H}(\mathbf{S}) \in \mathbb{R}^{n \times n}$ maps the input signal $\mathbf{x} \in \mathbb{R}^n$ to the output signal $\mathbf{y} = \mathcal{H}(\mathbf{S})\mathbf{x} \in \mathbb{R}^{n}$. Specifically, we consider graph filter of the form:\vspace{-.1cm}
\begin{equation} \textstyle
    \mathcal{H}(\mathbf{S}) = \sum_{t=0}^{T-1}h_t \mathbf{S}^t.\vspace{-.1cm}
\end{equation}
Note that $T \in \mathbb{N}$ is the filter's order. 
Based on the eigendecomposition of GSO $\mathbf{S} = \mathbf{V}\mathbf{\Lambda}\mathbf{V}^\top$, we define the frequency responses as $h(\lambda) := \sum_{t=0}^{T-1} h_t\lambda^t $ and subsequently the graph filter can be written as $\mathcal{H}(\mathbf{S}) = \mathbf{V}h(\mathbf{\Lambda})\mathbf{V}^\top$, where $h(\mathbf{\Lambda}) = {\rm Diag}\left(h(\lambda_1), ..., h(\lambda_n)\right)$.
We note that the GFT of ${\bf y}$ can be written as $\widehat{\bf y} = h ( \bm{\Lambda} ) \widehat{\bf x}$.

This study is concerned with ${\cal H}(\cdot)$ that are \emph{low pass graph filters} which retain the low frequency components of its input while suppressing the high frequency components. We define this class of graph filters of interest with respect to the frequency response function $h(\lambda)$. Our definition is extended from \cite{ramakrishna} as follows:\vspace{-.1cm}
\begin{Def} \label{def:1}
The graph filter $\mathcal{H}(\cdot)$ is said to be $\Bar{\lambda}$-low pass if\vspace{-.1cm}
\begin{equation}
    \eta = \big( \min\limits_{\lambda \in \left[0, \Bar{\lambda}\right]}\left|h(\lambda)\right| \big)^{-1} {\max\limits_{\lambda \in \left[\Bar{\lambda}, \infty \right)}\left|h(\lambda)\right|} < 1,\vspace{-.2cm}
\end{equation}
where $\eta < 1$ is known as the low pass ratio for the graph filter.\vspace{-.1cm}
\end{Def}
\noindent The above definition requires the graph filter ${\cal H}(\cdot)$ to have a cutoff frequency at $\Bar{\lambda}$ such that any components with graph frequency above $\Bar{\lambda}$ are attenuated by a factor of at least $\eta < 1$. Unlike \cite{ramakrishna}, our definition is applicable regardless of the GSO ${\bf S}$ used as it only constraints on the form of the graph filter function determined by the filter coefficients $\{ h_t \}_{t=0}^{T-1}$. Lastly, the cutoff frequency $\Bar{\lambda}$ maybe different depending on the graph size $n$ to achieve a reasonable low pass filtering performance.
\vspace{.1cm}

\noindent \textbf{Stability of Graph Filters}. 
We consider a perturbation of ${\cal G}$ into $\hat{\cal G}$, e.g., via edge rewiring, which results in the GSO $\hat{\bf S}$. Our goal is to study the \emph{stability of ${\cal H}({\bf S})$} via analyzing the amount of changes in the filter's output under perturbation. 
As inspired by \cite{gama}, we measure the \emph{stability of ${\cal H}({\bf S})$} using the quantity:
\begin{align}
\label{graphfilterdistance}
\hspace{-.1cm}\mathbb{D}_{\cal H} ( {\bf S}, \hat{\bf S} ) = \hspace{-.4cm} \sup_{ {\bf x} \in \mathbb{R}^n, {\bf x} \neq {\bf 0} } \hspace{-.4cm}
\frac{\big\|\mathcal{H}(\mathbf{S})\mathbf{x} - \mathcal{H}(\mathbf{\hat{S}})\mathbf{x} \big\|_2}{||\mathbf{x}||_2} = \| \mathcal{H}(\mathbf{S}) - \mathcal{H}( \hat{\bf S} ) \|_2.\hspace{-.1cm}
\end{align}
We call the quantity $\mathbb{D}_{\cal H} ( {\bf S}, \hat{\bf S} )$ the \textit{graph filter distance} between the filters $\mathcal{H}(\mathbf{S})$ and $\mathcal{H}(\mathbf{\hat{S}})$. Notice that \eqref{graphfilterdistance} yields an upper bound to the operator distance modulo permutation measure $\| \cdot \|_{\cal P}$ defined in \cite{gama} since we have ignored the ambiguity due to node permutations.\vspace{.1cm}

\noindent \textbf{Stochastic Block Model}. \label{subsec:sbm} To derive stability properties of graph filters under a large number of edge rewires, we concentrate on the stochastic block models (SBMs) random graphs to develop insights. Here, ${\sf SBM}(n, k, {\bf B}, {\bf Z})$ \cite{rohe} denotes a random graph model with $n$ nodes partitioned into $k$ blocks, the membership matrix $\mathbf{Z} \in \{0, 1\}^{n \times k}$ such that $[\mathbf{Z}]_{ij} = 1$ if and only if node $i$ is in community $j$, and a connectivity matrix $\mathbf{B} = [b_{ij}]_{1 \leq i, j \leq k} \in [0, 1]^{k \times k}$, whose entries $b_{ij}$ being the probability of edges between nodes in block $i$ and block $j$. Let ${\cal G} \sim {\sf SBM}(n, k, {\bf B}, {\bf Z})$, its empirical adjacency matrix $\mathbf{A}$ satisfies $\mathbb{E}[\mathbf{A}] = \mathcal{A} = \mathbf{Z}\mathbf{B}\mathbf{Z}^\top$, which is referred to as the population adjacency matrix. 
Lastly, we also consider a special case of SBMs given by the planted partition model ${\sf PPM}(n, k, a_n, b_n)$ which has $k$ equally sized blocks with $n/k$ nodes per block. The PPM is a special case of an SBM where the connectivity matrix is given by $\mathbf{B} = a_n\mathbf{I} + b_n\mathbf{1}\mathbf{1}^T$. 
\vspace{-.2cm}

\section{Main results}\label{sec:results}\vspace{-.2cm}
This section presents our main findings on the stability property of low pass graph filters. We first present a bound on $\mathbb{D}_{\cal H} ( {\bf S}, \hat{\bf S} )$ which explicitly considers the strength of low pass filtering and the structural difference between ${\bf S}, \hat{\bf S}$. 
Let us introduce two assumptions for simplifying the constants in our bound:\vspace{-.1cm}
\begin{assumption}[$\Bar{\lambda}$] \label{assumption:1} There exists a constant $\mathbb{H}_{\max}$ such that\vspace{-.2cm}
\begin{equation}
    \textstyle \sup_{\lambda \in [0, \Bar{\lambda}]} |h(\lambda)| \leq \mathbb{H}_{\max}.\vspace{-.2cm}
\end{equation}
\end{assumption}
\begin{assumption}[$\Bar{\lambda}$]
\label{assumption:2}
There exists a constant $L_{\mathbb{H}}$ such that\vspace{-.2cm}
\begin{equation}
    |h(\lambda) - h(\lambda')| \leq L_{\mathbb{H}} \, |\lambda - \lambda'|,~\forall~\lambda, \lambda' \in [0, \Bar{\lambda}].\vspace{-.1cm}
\end{equation}
\end{assumption}
\noindent H\ref{assumption:1}  assumes that the frequency response function of the graph filter is bounded in the interval $[0, \Bar{\lambda}]$. Meanwhile, H\ref{assumption:2} imposes a Lipschitz continuity condition on the frequency response function. Note that H\ref{assumption:2} is weaker than a uniform Lipschitz filter condition as we only consider the graph frequencies up to $\lambda \leq \Bar{\lambda}$, and when $\Bar{\lambda}$ is small, it can imply the integral Lipschitz condition in \cite{gama}, i.e., $\sup_{\lambda \geq 0} | \lambda h'(\lambda) | \leq C$. An example  satisfying Definition~\ref{def:1}, H\ref{assumption:1}, H\ref{assumption:2} is the exponential graph filter ${\cal H}( {\bf S} ) = e^{- \sigma {\bf S}}$, where $\sigma > 0$.

The following theorem bounds the graph filter distance $\mathbb{D}_{\cal H}({\bf S}, \hat{\bf S})$. For brevity, we denote $\mathbf{\Lambda}_k$ and $\mathbf{V}_k$ (resp. $\mathbf{\hat{\Lambda}}_k$ and $\mathbf{\hat{V}}_k$) as the matrices of smallest-$k$ eigenvalues and eigenvectors of $\mathbf{S}$ (resp.~$\hat{\bf S}$):
\begin{Theorem}
\label{thm:1}
Let ${\cal H}(\cdot)$ be a $\Bar{\lambda}$-low pass filter with ratio $\eta$. Suppose that H\ref{assumption:1}($\Bar{\lambda}$), H\ref{assumption:2}($\Bar{\lambda}$) hold, and the graph frequencies of the GSOs ${\bf S}, \hat{\bf S}$ satisfy $\lambda_k \leq \Bar{\lambda} \leq \lambda_{k+1}$ and $\hat{\lambda}_k \leq \Bar{\lambda} \leq \hat{\lambda}_{k+1}$ for some $1 \leq k \leq n-1$. Then, the graph filter distance satisfies:\vspace{-.1cm}
\begin{align}
\mathbb{D}_{\cal H}({\bf S}, \hat{\bf S}) & \leq 2 \mathbb{H}_{\max} \, \eta \label{eq:gfbd} \\
& \quad + L_{\mathbb{H}} ||\mathbf{\Lambda}_k - \mathbf{\hat{\Lambda}}_k||_2 + 2 \mathbb{H}_{\max} ||\mathbf{V}_k - \mathbf{\hat{V}}_k||_2 . \nonumber
\end{align}
\end{Theorem}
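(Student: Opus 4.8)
The plan is to work entirely in the graph frequency domain, exploiting the spectral representations $\mathcal{H}(\mathbf{S}) = \sum_{i=1}^n h(\lambda_i)\mathbf{v}_i\mathbf{v}_i^\top$ and the analogue for $\hat{\bf S}$. The cutoff index $k$, guaranteed by $\lambda_k \le \Bar{\lambda} \le \lambda_{k+1}$ and $\hat\lambda_k \le \Bar{\lambda} \le \hat\lambda_{k+1}$, lets me split each filter into a low-frequency block carried by the first $k$ eigenpairs and a high-frequency block carried by the remaining $n-k$. Writing $\mathcal{H}(\mathbf{S}) = \mathbf{V}_k h(\mathbf{\Lambda}_k)\mathbf{V}_k^\top + \mathbf{V}_k^c h(\mathbf{\Lambda}_k^c)(\mathbf{V}_k^c)^\top$, where the superscript $c$ denotes the complementary block, and subtracting the corresponding decomposition of $\mathcal{H}(\hat{\bf S})$, the triangle inequality separates $\mathbb{D}_{\cal H}(\mathbf{S},\hat{\bf S})$ into a high-frequency contribution and a low-frequency contribution, which I bound separately to produce the first term and the last two terms of \eqref{eq:gfbd} respectively.

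For the high-frequency contribution I would invoke the low-pass property directly. Since every $\lambda_i$ with $i>k$ exceeds $\Bar{\lambda}$, the diagonal block $h(\mathbf{\Lambda}_k^c)$ has spectral norm at most $\max_{\lambda\ge\Bar\lambda}|h(\lambda)|$. By Definition~\ref{def:1} this equals $\eta\,\min_{\lambda\in[0,\Bar\lambda]}|h(\lambda)|$, and since $\min_{\lambda\in[0,\Bar\lambda]}|h(\lambda)| \le \sup_{\lambda\in[0,\Bar\lambda]}|h(\lambda)| \le \mathbb{H}_{\max}$ by H\ref{assumption:1}, the whole high-frequency block of $\mathcal{H}(\mathbf{S})$ has norm at most $\eta\mathbb{H}_{\max}$; the same holds for $\hat{\bf S}$. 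Because $\mathbf{V}_k^c$ and $\hat{\mathbf{V}}_k^c$ have orthonormal columns the projector factors contribute unit norm, so the triangle inequality yields $2\mathbb{H}_{\max}\eta$. This is where the conceptual gain lives: the high-frequency part is governed by filtering strength alone and is completely insensitive to the number of rewired edges.

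For the low-frequency contribution I would telescope through two intermediate matrices to decouple the eigenvalue perturbation from the eigenvector perturbation. Replacing $h(\mathbf{\Lambda}_k)$ by $h(\hat{\mathbf{\Lambda}}_k)$ while keeping $\mathbf{V}_k$ gives the term $\mathbf{V}_k\big(h(\mathbf{\Lambda}_k)-h(\hat{\mathbf{\Lambda}}_k)\big)\mathbf{V}_k^\top$; since $\mathbf{V}_k$ is an isometry on its column space its norm equals $\max_{i\le k}|h(\lambda_i)-h(\hat\lambda_i)|$, and as all these eigenvalues lie in $[0,\Bar\lambda]$, H\ref{assumption:2} bounds it by $L_{\mathbb{H}}\|\mathbf{\Lambda}_k-\hat{\mathbf{\Lambda}}_k\|_2$. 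The remaining difference $\mathbf{V}_k h(\hat{\mathbf{\Lambda}}_k)\mathbf{V}_k^\top - \hat{\mathbf{V}}_k h(\hat{\mathbf{\Lambda}}_k)\hat{\mathbf{V}}_k^\top$ I would split once more by adding and subtracting $\hat{\mathbf{V}}_k h(\hat{\mathbf{\Lambda}}_k)\mathbf{V}_k^\top$, producing $(\mathbf{V}_k-\hat{\mathbf{V}}_k)h(\hat{\mathbf{\Lambda}}_k)\mathbf{V}_k^\top$ and $\hat{\mathbf{V}}_k h(\hat{\mathbf{\Lambda}}_k)(\mathbf{V}_k-\hat{\mathbf{V}}_k)^\top$; using submultiplicativity, the unit spectral norm of $\mathbf{V}_k$ and $\hat{\mathbf{V}}_k$, and $\|h(\hat{\mathbf{\Lambda}}_k)\|_2\le\mathbb{H}_{\max}$ from H\ref{assumption:1}, each is at most $\mathbb{H}_{\max}\|\mathbf{V}_k-\hat{\mathbf{V}}_k\|_2$, giving $2\mathbb{H}_{\max}\|\mathbf{V}_k-\hat{\mathbf{V}}_k\|_2$. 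Summing the three bounds reproduces \eqref{eq:gfbd}.

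The individual steps are elementary, so the main obstacle is conceptual rather than computational. One must fix a consistent labelling and orientation of the eigenpairs so that $\mathbf{V}_k,\hat{\mathbf{V}}_k$ (and likewise $\mathbf{\Lambda}_k,\hat{\mathbf{\Lambda}}_k$) are compared in the correct order; in particular $\|\mathbf{V}_k-\hat{\mathbf{V}}_k\|_2$ is only meaningful after resolving the sign and rotation ambiguity of the eigenvectors, and the bound should be read as holding for a given choice of eigenbases. The substantive difficulty — deferred to Corollaries~\ref{cor:1} and~\ref{cor:2} — is then to show that this subspace-distance term remains small, e.g.\ through a Davis--Kahan type argument exploiting the spectral gap at $\Bar{\lambda}$, even when a large number of edges are rewired within communities.
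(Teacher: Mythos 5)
Your proposal is correct and follows essentially the same route as the paper's own proof: the identical triangle-inequality split at the spectral cutoff $\Bar{\lambda}$, the low-pass ratio together with H\ref{assumption:1} bounding the high-frequency block by $2\eta\,\mathbb{H}_{\max}$, and the same telescoping through $\mathbf{V}_k h(\hat{\bm{\Lambda}}_k)\mathbf{V}_k^\top$ to obtain the $L_{\mathbb{H}}\|\bm{\Lambda}_k - \hat{\bm{\Lambda}}_k\|_2$ and $2\mathbb{H}_{\max}\|\mathbf{V}_k - \hat{\mathbf{V}}_k\|_2$ terms. Your additional remark about fixing the eigenvector sign/rotation ambiguity is a fine point the paper leaves implicit, but it does not change the argument.
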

\begin{proof}
Using the triangular inequality, we observe that\vspace{-.1cm}
\begin{align}
\mathbb{D}_{\cal H}({\bf S}, \hat{\bf S})   &\leq \|\mathbf{V}_k h(\mathbf{\Lambda}_k) \mathbf{V}_k^T - \mathbf{\hat{V}}_k h(\mathbf{\hat{\Lambda}}_k) \mathbf{\hat{V}}_k \|_2 \nonumber \\
& \textstyle + \big\|\sum_{i=k+1}^{n} \{ h(\lambda_i)\mathbf{v}_i\mathbf{v}_i^\top - h(\hat{\lambda}_i)\mathbf{\hat{v}}_i\mathbf{\hat{v}}_i^\top \} \big\|_2. \label{expr:3}
\end{align}
By H\ref{assumption:1}($\Bar{\lambda}$) and the property of $\Bar{\lambda}$-low pass graph filters as specified in Definition \ref{def:1}, it is easy to derive that \vspace{-.1cm}
\begin{equation} \label{eq:maxeta}
\textstyle \max_{i \in \{k+1, ..., n\}} \max\{ |h(\lambda_i)|, |h(\hat{\lambda}_i)| \} \leq \eta \, \mathbb{H}_{\max}.
\end{equation}
This bounds the second term in \eqref{expr:3} as $2 \eta \, \mathbb{H}_{\max}$ where we have further used $\| \sum_{i=k+1}^n {\bf v}_i {\bf v}_i^\top \|_2 \leq 1$. On the other hand, by H\ref{assumption:1}($\Bar{\lambda}$), H\ref{assumption:2}($\Bar{\lambda}$), we can upper bound the first term in \eqref{expr:3} as \vspace{-.1cm}
\begin{align}
&||\mathbf{V}_k h(\mathbf{\Lambda}_k) \mathbf{V}_k^\top - \mathbf{\hat{V}}_k h(\mathbf{\hat{\Lambda}}_k) \mathbf{\hat{V}}_k^\top ||_2 \nonumber \\
&\leq || h(\mathbf{\Lambda}_k) - h( {\bm{\hat{\Lambda}}_k}) ||_2 + ||\mathbf{V}_k h(\mathbf{\hat{\Lambda}}_k) \mathbf{V}_k - \mathbf{\hat{V}}_k h(\mathbf{\hat{\Lambda}}_k) \mathbf{\hat{V}}_k||_2 \nonumber \\
& \leq L_{\mathbb{H}} \| \bm{\Lambda}_k - \bm{\hat{\Lambda}}_k \|_2 + 2 \mathbb{H}_{\max} \| {\bf V}_k - \hat{\bf V}_k \|_2 \label{eq:lap}
\end{align}
Combining \eqref{eq:maxeta}, \eqref{eq:lap} yields the desired bound.\vspace{-.1cm}
\end{proof}
\noindent The condition which requires the cutoff frequency $\Bar{\lambda}$ to satisfy $\lambda_k \leq \Bar{\lambda} \leq \lambda_{k+1}$ and $\hat{\lambda}_k \leq \Bar{\lambda} \leq \hat{\lambda}_{k+1}$ impose a \emph{spectral gap} requirement on the eigenvalues of ${\bf S}, \hat{\bf S}$ where a common $\Bar{\lambda}$ separates the `low' and `high' frequencies of the two graphs. This condition can be satisfied if the two graphs ${\bf S}, \hat{\bf S}$ are $k$-modular \cite{newman2013spectral}. We remark that ${\cal H}({\bf S})$, ${\cal H}(\hat{\bf S})$ are also $k$-low pass graph filter \cite{ramakrishna} in this case.

The graph filter distance bound \eqref{eq:gfbd} comprises of three terms. The first depends on the low pass ratio $\eta$ which relies on the graph filter's frequency response. The last two terms capture the similarity in the graph structure by comparing the pairs ${\bf V}_k, \hat{\bf V}_k$ and $\bm{\Lambda}_k, \bm{\hat{\Lambda}}_k$. We expect the pairs $\| {\bf V}_k - \hat{\bf V}_k \|_2$ and $\| \bm{\Lambda}_k - \bm{\hat{\Lambda}}_k\|_2$ to be small when ${\bf S}, \hat{\bf S}$ are similar in terms of their community structures. 

Theorem~\ref{thm:1} implies that with a low pass filter ${\cal H}(\cdot)$ satisfying $\eta \ll 1$, the graph filter distance only depends on the difference between ${\bf S}, \hat{\bf S}$ in terms of their \emph{community structure}. 
The distance will be insensitive to the \emph{number of edge rewiring} unlike in \cite{gama,kenlay_icassp21}. Our next endeavor is to showcase examples of graph perturbation where the community structures are invariant, thereby applying Theorem~\ref{thm:1} concludes that the graph filter is stable.\vspace{.2cm}

\noindent \textbf{Edge Rewiring Scheme.} \label{subsec:edgerewiring} We focus on a simple case based on SBM by fixing the number of blocks $k$, the membership matrix ${\bf Z}$ and connectivity matrix ${\bf B}$. In this example, both the original graph ${\cal G}$ and the perturbed graph $\hat{\cal G}$ are generated from the SBM with the same parameters such that ${\cal G}, \hat{\cal G} \sim {\sf SBM} (n,k, {\bf B}, {\bf Z})$. 
The above setting covers an edge rewiring scheme as follows: 
\begin{enumerate}[noitemsep,leftmargin=4mm]
\item Generate the original graph as ${\cal G} \sim {\sf SBM} (n,k, {\bf B}, {\bf Z})$.
\item For each inter/intra-cluster block $(i,j)$, {\sf (i)} delete a portion of $p_{\sf re} \in [0,1]$ edges uniformly, and {\sf (ii)} add edges to the node pairs without edges with probability $[ b_{ij}^{-1} - (1 - p_{\sf re} ) ]^{-1} p_{\sf re}$, selected independently.
\end{enumerate} 
Notice that edge rewiring in the above only occurs for the edges within the same inter/intra-cluster block. It can be shown that the resultant perturbed graph satisfies $\hat{\cal G} \sim {\sf SBM} (n,k, {\bf B}, {\bf Z})$.\vspace{-.2cm}

\subsection{Stability with {Unnormalized Laplacian as GSO}}\vspace{-.1cm}
Our plan is to study the graph filter distance bound through borrowing recent consistency result of SBMs when $n \rightarrow \infty$. Notice that works on the convergence of \emph{unnormalized Laplacian} are scarce. This may be due to the fact that unnormalized Laplacian does not have good concentration due to the involved diagonal degree matrix \cite{deng}, \cite{luxberg_08}. Hence, we cannot use known perturbation bounds such as Weyl's inequality \cite{bhatia} and Davis-Kahan theorem \cite{yu} to derive concentration of eigenvalues and eigenvectors. In the following, we adopt the results from \cite{deng} and concentrate on a special case of the SBM model given by a \emph{sparse} PPM with $k=2$ blocks such that ${\cal G} \sim {\sf PPM}(n, 2, \alpha \log n/n, \beta \log n/n)$, where $\alpha, \beta \in \mathbb{R}^+$.  

We first observe the following high probability bounds on the bottom-$2$ eigenvectors and eigenvalues:
\begin{Corollary} \label{cor:1}
Let $\alpha, \beta \in \mathbb{R}^{+}$, consider PPM graphs as ${\cal G}, \hat{\cal G} \sim {\sf PPM}(n, 2, \alpha \log n/n, \beta \log n/n)$.
Denote their unnormalized Laplacian as ${\bf L}_U$, $\hat{\bf L}_U$. Moreover, $\mathbf{\Lambda}_2$ and $\mathbf{V}_2$ (resp. $\mathbf{\hat{\Lambda}}_2$ and $\mathbf{\hat{V}}_2$) are the matrices of smallest-$2$ eigenvalues and eigenvectors of $\mathbf{L}_U$ (resp.~$\hat{\bf L}_U$). Suppose that $\sqrt{\alpha} - \sqrt{\beta} > \sqrt{2}$, then with probability at least $1-o(1)$, it holds \vspace{-.1cm}
\begin{equation}
   \| {\bf V}_2 - \hat{\bf V}_2 \|_2 = o(1), \quad \| \bm{\Lambda}_2 - \bm{\hat{\Lambda}}_2 \|_2 = {\cal O}( \log n / n ), \vspace{-.1cm}
\end{equation}
where $o(1) \rightarrow 0$ as $n \rightarrow \infty$.
\end{Corollary}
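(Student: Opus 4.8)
The plan is to exploit that both $\mathcal G$ and $\hat{\mathcal G}$ are drawn from the \emph{same} model, so that their unnormalized Laplacians $\mathbf L_U,\hat{\mathbf L}_U$ share a common population Laplacian $\mathcal L := \mathcal D - \mathcal A$, where $\mathcal A = \mathbf Z\mathbf B\mathbf Z^\top$ and $\mathcal D = \mathrm{Diag}(\mathcal A\mathbf 1)$. The argument is then a two-step triangle inequality: first show that the bottom-$2$ eigendata of each graph concentrate around those of $\mathcal L$, then combine. So I would begin by diagonalizing $\mathcal L$ explicitly. For the balanced $2$-block PPM with within-/between-block edge probabilities $p=\alpha\log n/n$ and $q=\beta\log n/n$, the matrix $\mathcal L$ has the null pair $(0,\mathbf 1/\sqrt n)$, the Fiedler pair $(nq,\mathbf c/\sqrt n)$ where $\mathbf c\in\{\pm1\}^n$ is the signed community-membership vector, and the remaining $n-2$ eigenvalues all equal to $\tfrac n2(p+q)$. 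Write $\mathbf V_2^{\mathsf{pop}}=[\mathbf 1/\sqrt n,\ \mathbf c/\sqrt n]$ and $\bm\Lambda_2^{\mathsf{pop}}=\mathrm{Diag}(0,nq)$.

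Two features of this spectrum drive the proof. First, there is a spectral gap $\tfrac n2(p+q)-nq=\tfrac{\alpha-\beta}{2}\log n>0$ (note $\sqrt\alpha-\sqrt\beta>\sqrt2$ forces $\alpha>\beta$) separating the bottom-$2$ eigenspace from the bulk, which is exactly the separation one needs to invoke a Davis--Kahan-type eigenvector result. Second, the two bottom eigenvalues $0$ and $nq=\beta\log n$ are \emph{distinct}, so there is no rotational ambiguity inside the bottom eigenspace: each of $\mathbf v_1,\mathbf v_2$ is identified up to a sign. This resolves the alignment issue, because $\mathbf v_1=\hat{\mathbf v}_1=\mathbf 1/\sqrt n$ holds \emph{exactly} for any connected-graph Laplacian, while the Fiedler vectors $\mathbf v_2,\hat{\mathbf v}_2$ can be oriented consistently (e.g.\ by $\mathbf v_2^\top\mathbf c\ge0$) to target the same vector $\mathbf c/\sqrt n$.

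With this set-up I would import from \cite{deng} the strong-consistency bounds for the unnormalized Laplacian that hold precisely in the exact-recovery regime $\sqrt\alpha-\sqrt\beta>\sqrt2$: with probability $1-o(1)$ the suitably oriented bottom-$2$ eigenvectors satisfy $\|\mathbf V_2-\mathbf V_2^{\mathsf{pop}}\|_2=o(1)$, and the same for $\hat{\mathbf V}_2$; moreover the Fiedler eigenvalue concentrates as $|\lambda_2-nq|=\mathcal O(\log n/n)$ and likewise $|\hat\lambda_2-nq|=\mathcal O(\log n/n)$, while $\lambda_1=\hat\lambda_1=0$ exactly. A single triangle inequality then finishes both claims: $\|\mathbf V_2-\hat{\mathbf V}_2\|_2\le\|\mathbf V_2-\mathbf V_2^{\mathsf{pop}}\|_2+\|\mathbf V_2^{\mathsf{pop}}-\hat{\mathbf V}_2\|_2=o(1)$, and since $\bm\Lambda_2-\hat{\bm\Lambda}_2=\mathrm{Diag}(0,\lambda_2-\hat\lambda_2)$ we obtain $\|\bm\Lambda_2-\hat{\bm\Lambda}_2\|_2=|\lambda_2-\hat\lambda_2|\le|\lambda_2-nq|+|nq-\hat\lambda_2|=\mathcal O(\log n/n)$.

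The main obstacle is that no standard black-box perturbation tool applies directly: the unnormalized Laplacian $\mathbf L_U$ does \emph{not} concentrate well in spectral norm in the sparse $\Theta(\log n/n)$ regime, since $\mathbf D-\mathcal D$ carries degree fluctuations of order $\sqrt{\log n}$, so $\|\mathbf L_U-\mathcal L\|_2$ is far too large to feed into Weyl's inequality or plain Davis--Kahan and still get an $o(1)$ conclusion. The genuine work therefore lives inside the cited result \cite{deng}, which must use a finer (entrywise / leave-one-out) analysis tailored to the unnormalized Laplacian to extract eigenvector consistency and the sharp eigenvalue rate despite the poor operator-norm control. My contribution in this corollary is only the reduction above: recognizing the shared population spectrum, verifying the spectral-gap and distinct-eigenvalue conditions, pinning down the sign alignment, and assembling the two triangle inequalities.
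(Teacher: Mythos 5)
Your proposal follows essentially the same route as the paper: the paper's own ``proof'' is nothing more than a citation of Theorem 8, Lemma 9 and Theorem 11 of \cite{deng} (per-graph concentration of the bottom-$2$ eigenvectors/eigenvalues of ${\bf L}_U$ around the population quantities in the exact-recovery regime $\sqrt{\alpha}-\sqrt{\beta}>\sqrt{2}$), combined implicitly with a triangle inequality over the two independent draws --- exactly the reduction you spell out. The details you add (the explicit population spectrum with Fiedler pair $(nq,\,{\bf c}/\sqrt{n})$ and bulk at $\tfrac{n}{2}(p+q)$, the spectral gap, and the sign alignment of the Fiedler vectors) are precisely the bookkeeping the paper skips ``in the interest of space.''
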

\noindent The corollary is obtained by Theorem 8, Lemma 9 and Theorem 11 of \cite{deng} and the randomness is due to the generation of PPM graphs. The condition $\sqrt{\alpha} - \sqrt{\beta} > \sqrt{2}$ is known as the spectral gap criterion which is necessary to distinguish the blocks in PPM \cite[Theorem 13]{abbe}. We shall skip the proof in the interest of space. 

To apply Theorem~\ref{thm:1}, we choose $k=2$ as the PPM contains 2 blocks which results in a graph with $2$ densely connected clusters. We observe that for the unnormalized Laplacian of sparse PPM graph, its non-zero eigenvalues grow with $\Theta( \log n )$. Satisfying the spectral gap condition $\lambda_2 \leq \Bar{\lambda} \leq \lambda_3$, $\hat\lambda_2 \leq \Bar{\lambda} \leq \hat\lambda_3$ therefore requires the cutoff frequency $\Bar{\lambda}$ in Definition~\ref{def:1} to grow with $\log n$ as well. To this end, if we consider the exponential graph filter\footnote{In fact, this holds even for the filter ${\cal H}( {\bf L}_U ) = e^{- \sigma {\bf L}_U / \log n }$.} ${\cal H}( {\bf L}_U ) = e^{- \sigma {\bf L}_U }$, we observe that the low pass ratio $\eta$, the constants in H\ref{assumption:1}($\Bar{\lambda}$), H\ref{assumption:2}($\Bar{\lambda}$) are insensitive to the growth of $\Bar{\lambda} = \Theta(\log n)$. Under the above premises, we have \vspace{-.1cm}
\begin{equation}
    \mathbb{D}_{\cal H} ( {\bf L}_U, \hat{\bf L}_U ) \leq 2 \eta \, \mathbb{H}_{\max} + \mathbb{H}_{\max} \, o( 1 ) + L_{\mathbb{H}} \, {\cal O} ( \log n / n ), \vspace{-.1cm}
\end{equation}
with high probability. As such, the graph filter distance between ${\cal H}({\bf L}_U)$ and ${\cal H}(\hat{\bf L}_U)$ is small when $\eta \ll 1$ and $n \rightarrow \infty$.\vspace{-.1cm}


\subsection{Stability with {Normalized Laplacian as GSO}}\vspace{-.1cm}
Similar to the previous subsection, our plan is to borrow the classical consistency result on SBMs from \cite{deng} and show that the low pass graph filter is stable as $n \rightarrow \infty$. Let $\alpha, \beta \in \mathbb{R}^+$, we consider the $k$-blocks sparse PPM with ${\cal G} \sim {\sf PPM}(n, k, \alpha \log n / n, \beta \log n / n)$, we observe the following high probability bound on the difference between the original and perturbed normalized Laplacian:
\begin{Corollary} \label{cor:2}
Let $\alpha, \beta \in \mathbb{R}^+$, consider PPM graphs as ${\cal G}, \hat{\cal G} \sim {\sf PPM}(n, k, \alpha \log n / n, \beta \log n / n)$.
Denote their normalized Laplacian as ${\bf L}_{\sf norm}$, $\hat{\bf L}_{\sf norm}$. 
With probability at least $1-o(1)$, it holds\vspace{-.1cm}
\begin{align}
\| {\bf L}_{\sf norm} - \hat{\bf L}_{\sf norm} \|_2 = {\cal O}\left( 1/\sqrt{\log n} \right).
\end{align}
\end{Corollary}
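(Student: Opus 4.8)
The plan is to compare both empirical normalized Laplacians against a common \emph{population} normalized Laplacian and then apply the triangle inequality. Since ${\cal G}$ and $\hat{\cal G}$ are drawn from the \emph{same} model ${\sf PPM}(n, k, \alpha \log n/n, \beta \log n/n)$, they share the identical population adjacency matrix ${\cal A} = {\bf Z}{\bf B}{\bf Z}^\top$ with ${\bf B} = (\alpha \log n / n)\mathbf{I} + (\beta \log n / n)\mathbf{1}\mathbf{1}^\top$. Letting ${\cal D} = {\rm Diag}({\cal A}\mathbf{1})$ and defining the population normalized Laplacian ${\cal L}_{\sf norm} = \mathbf{I} - {\cal D}^{-1/2}{\cal A}{\cal D}^{-1/2}$, I would write
\begin{equation}
\| {\bf L}_{\sf norm} - \hat{\bf L}_{\sf norm} \|_2 \leq \| {\bf L}_{\sf norm} - {\cal L}_{\sf norm} \|_2 + \| {\cal L}_{\sf norm} - \hat{\bf L}_{\sf norm} \|_2 ,
\end{equation}
so that it suffices to bound the deviation of a single empirical normalized Laplacian from ${\cal L}_{\sf norm}$ by $\mathcal{O}(1/\sqrt{\log n})$ with probability $1 - o(1)$.

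The second step is to establish this single-graph concentration bound. In the present regime the expected degree is $\bar{d} = \Theta(\log n)$, which is precisely the threshold above which the normalized Laplacian concentrates sharply. I would invoke the consistency result of \cite{deng} for the normalized Laplacian, which under $\bar{d} \gtrsim \log n$ yields
\begin{equation}
\| {\bf L}_{\sf norm} - {\cal L}_{\sf norm} \|_2 = \mathcal{O}\big( 1/\sqrt{\bar{d}} \big) = \mathcal{O}\big( 1/\sqrt{\log n} \big)
\end{equation}
with probability $1 - o(1)$. Internally this rests on two ingredients: (i) spectral-norm concentration of the centered adjacency matrix ${\bf A} - {\cal A}$, which in the sparse regime requires a regularization or truncation of atypically high-degree vertices; and (ii) concentration of the empirical degrees around their means, so that ${\bf D}^{-1/2}$ is uniformly close to ${\cal D}^{-1/2}$ and, in particular, the minimum degree is $\Theta(\log n)$ with high probability.

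Finally, I would apply the identical estimate to $\hat{\bf L}_{\sf norm}$, combine the two bounds through the triangle inequality above, and take a union bound over the two independently generated graphs, which preserves the $1 - o(1)$ success probability. The main obstacle is the second step: obtaining the sharp $\mathcal{O}(1/\sqrt{\log n})$ rate, rather than a weaker polylogarithmic factor, demands the tight spectral concentration of the sparse normalized Laplacian \emph{together with} uniform control of the degrees near the $\log n$ connectivity threshold. This delicate estimate is exactly what \cite{deng} provides, so the argument reduces to verifying that the PPM parameters $(\alpha \log n/n, \beta \log n/n)$ satisfy the hypotheses of that concentration result.
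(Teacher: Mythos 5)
Your proposal matches the paper's own (sketched) argument: the paper obtains Corollary~\ref{cor:2} by extending \cite[Theorem 4]{deng} (itself based on \cite{lei}), i.e., by concentrating each empirical normalized Laplacian around the common population normalized Laplacian---valid because $\min_{ij}{\cal A}_{ij} \geq c_0 \log n/n$ gives minimum degrees $\Omega(\log n)$---and combining via the triangle inequality, which is exactly your decomposition. The only cosmetic difference is your mention of degree truncation/regularization, which is unnecessary in this $\Omega(\log n)$-degree regime, but that does not affect the correctness of your argument.
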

\noindent The above corollary is obtained from extending \cite[Theorem 4]{deng}, which is based on \cite{lei}. In particular, in the PPM graphs considered, we have  $\min_{ij}{\cal A}_{ij} \geq c_0\log n / n$ such that the minimum degrees of ${\cal G}, \hat{\cal G}$ are $\Omega( \log n )$. We skip the proof in the interest of space.
Notice that compared to the previous case with unnormalized Laplacian matrices, we obtained a stronger concentration with respect to the normalized Laplacian matrices themselves.


To apply Theorem~\ref{thm:1}, using the Weyl's inequality and Davis-Kahan theorem one can show that $\| {\bf V}_k - \hat{\bf V}_k \|_2 = {\cal O}( 1/\sqrt{ \log n })$, $\| \bm{\Lambda}_k - \bm{\hat\Lambda}_k \|_2 = {\cal O}( 1/\sqrt{ \log n })$ for the bottom-$k$ eigenvectors and eigenvalues. Furthermore, if we consider the special case of $k = 2$ blocks PPM graphs, then $\lambda_2 ( {\bf L}_{\sf norm} )$ will be approximately  $2 \beta / (\alpha + \beta)$ while $\lambda_3 ( {\bf L}_{\sf norm} )$ is at least $1 - {\cal O}(1/\sqrt{\log n})$ \cite[Theorem 10]{deng}. With the spectral gap condition $\sqrt{\alpha} - \sqrt{\beta} > \sqrt{2}$ and let $\beta \leq 1$ for simplicity, as long as the graph filter ${\cal H}(\cdot)$ satisfies Definition~\ref{def:1} with a constant cutoff frequency $\Bar{\lambda} \approx 1/2$, we obtain:\vspace{-.1cm}
\begin{equation}
    \mathbb{D}_{\cal H} ( {\bf L}_{\sf norm}, \hat{\bf L}_{\sf norm} ) \leq 2 \eta \, \mathbb{H}_{\max} + (\mathbb{H}_{\max} + L_{\mathbb{H}}) \, {\cal O}\left( 1/\sqrt{\log n} \right), \vspace{-.1cm}
\end{equation}
with high probability. In other words, the graph filter distance between ${\cal H}({\bf L}_{\sf norm})$, ${\cal H}(\hat{\bf L}_{\sf norm})$ is small when $\eta \ll 1$ and $n \rightarrow \infty$.

\begin{Remark} One may apply \cite[Proposition 1]{kenlay_icml21} to yield \vspace{-.1cm}
\begin{equation} \label{eq:stable_kenlay}
    \textstyle 
    \mathbb{D}_{\cal H} ( {\bf L}_{\sf norm}, \hat{\bf L}_{\sf norm} ) \leq \sum_{t=0}^{T-1} t \, 2^{t-1} \, |h_t| \, \| {\bf L}_{\sf norm} - \hat{\bf L}_{\sf norm} \|_2, \vspace{-.1cm}
\end{equation}
where we have used $\| {\bf L}_{\sf norm} \|_2 \leq 2$. Therefore, we observe that when $T < \infty$, \eqref{eq:stable_kenlay} also yields the stability property for the graph filter as $n \rightarrow \infty$, without using the low pass condition on ${\cal H}(\cdot)$. In comparison, our result allows the filter order $T$ to be infinite. 
Note \cite{kenlay_icml21} also considered a special case with ${\cal H}({\bf L}) = ( {\bf I} + \alpha {\bf L})^{-1}$.\vspace{-.2cm} 
\end{Remark}

\section{Numerical experiments}\label{sec:exp}\vspace{-.2cm}
\noindent \textbf{Synthetic Experiment.} We compare the graph filter distances $\mathbb{D}_{\cal H} ( {\bf S}, \hat{\bf S} )$ [cf.~\eqref{graphfilterdistance}] under different configurations of graph filters and GSOs, as summarized in Table~\ref{tab:filtersettings}. Notice that in addition to the \emph{low pass} graph filters ${\cal H}_{\sf LP}({\bf S})$ which satisfy Definition~\ref{def:1}, as a control experiment we also consider \emph{high pass} graph filters ${\cal H}_{\sf HP}({\bf S})$. The latter aims to illustrate if the low pass property is necessary for the stability of graph filter in both situation. 

\begin{table}[t]
\caption{Graph filter settings for synthetic experiments}\vspace{-.2cm} \label{tab:filtersettings} 
\centering
\begin{tabular}{ l l l }
 \toprule
 & \text{Unnormalized Laplacian} & \text{Normalized Laplacian} \\ 
 \midrule
 ${\cal H}_{\sf LP}(\cdot)$ & $\exp(-(1/\log n)\mathbf{L}_U)$ & $\exp(-\mathbf{L}_{\sf norm})$ \\
 \hline
 ${\cal H}_{\sf HP}(\cdot)$ & $\exp((1/\log n)\mathbf{L}_U)$ & $\exp(\mathbf{L}_{\sf norm})$ \\ 
 \bottomrule 
\end{tabular}
\end{table}

\pgfplotsset{every tick label/.append style={font=\small}}
\begin{figure}[t]
\begin{subfigure}[b]{1\linewidth}
\centering
  {\sf \resizebox{1.\linewidth}{!}{\input{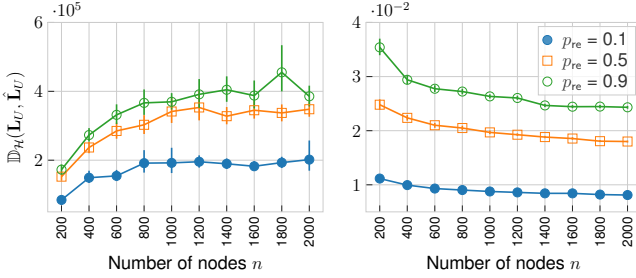}}}
  \caption{Unnormalized Laplacian ${\bf L}_U$ as GSO.}\vspace{-.2cm}
  \label{fig:D_n_1}
\end{subfigure}
\newline
\newline
\begin{subfigure}[b]{1\linewidth}
\centering
  {\sf \resizebox{1\linewidth}{!}{\input{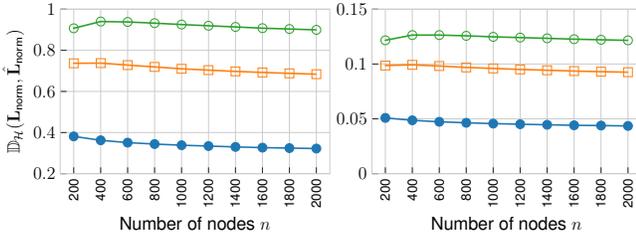}}}
  \caption{Normalized Laplacian ${\bf L}_{\sf norm}$ as GSO.}\vspace{-.2cm}
  \label{fig:D_n_2}
\end{subfigure}
\caption{Comparing the average graph filter distance $\mathbb{D}_{\cal H} ( {\bf S}, \hat{\bf S} )$ against the number of nodes $n$. (Left) High pass filter ${\cal H}_{\sf HP}(\cdot)$ (Right) Low pass filter ${\cal H}_{\sf LP}(\cdot)$. The vertical bars indicate $95\%$ confidence intervals of the graph filter distance.}\vspace{-.3cm}
\label{fig:D_n}
\end{figure}

To simulate the edge rewiring graph perturbations, we generate the original graphs ${\cal G}$ from a $k=2$ blocks PPM with ${\cal G} \sim {\sf PPM}( n, 2, 13\log n/n, 2\log n/n)$ with $n \in [200, 2000]$. An intra-block edge is assigned with probability $15\log n/n$ while an inter-block edge is assigned with probability $2\log n/n$. The perturbed graph is generated from ${\cal G}$ by the edge rewiring process in Section~\ref{subsec:edgerewiring}. We perform Monte-Carlo simulations with $100$ trials to estimate $\mathbb{E}[ {\cal D}_{\cal H}( {\bf S}, \hat{\bf S} )]$ under the random graph model and its perturbations. 

\begin{figure}[t]
    \pgfplotsset{every tick label/.append style={font=\tiny}}
    \pgfplotsset{every axis/.append style={ label style={font=\tiny}}}
    {\sf \resizebox{.53\linewidth}{!}{
\begin{tikzpicture}

\definecolor{color0}{rgb}{0.12156862745098,0.466666666666667,0.705882352941177}
\definecolor{color1}{rgb}{1,0.498039215686275,0.0549019607843137}
\definecolor{color2}{rgb}{0.172549019607843,0.627450980392157,0.172549019607843}
\definecolor{color3}{rgb}{0.83921568627451,0.152941176470588,0.156862745098039}

\begin{axis}[
axis line style={white!80!black},
legend cell align={left},
legend style={
  fill opacity=0.8,
  draw opacity=1,
  text opacity=1,
  at={(0.9,0.475)},
  anchor=east,
  draw=white!80!black,
  font = \tiny
},
log basis y={10},
tick align=inside,
x grid style={white!80!black},
xlabel={\footnotesize Rewiring ratio \(\displaystyle p_{\sf re}\)},
xmajorgrids,
xmajorticks=true,
xmin=-0.5, xmax=4.5,
xminorgrids,
xtick style={color=white!15!black},
xtick={0,1,2,3,4},
xticklabels={0.01,0.05,0.1,0.15,0.2},
xtick pos=left,
y grid style={white!80!black},
ylabel={\footnotesize \(\displaystyle \mathbb{D}_{\cal H} ( {\bf L}_U, \hat{\bf L}_U )\)},
ymajorgrids,
ymajorticks=true,
yminorgrids,
ymode=log,
ytick style={color=white!15!black},
ytick pos=left,
width=4.5cm,
height=4.5cm
]
\addplot [draw=color2, fill=color2, mark=*, only marks]
table{%
x  y
0 6.65095850922261e+20
1 3.99209829338645e+21
2 5.24009177602074e+21
3 5.46976260985139e+21
4 5.51991812621937e+21
};
\addlegendentry{$\exp( {\bf L}_U / \log n )$}
\addplot [line width=1.08pt, color2, forget plot]
table {%
0 6.65095850922261e+20
1 3.99209829338645e+21
2 5.24009177602074e+21
3 5.46976260985139e+21
4 5.51991812621937e+21
};
\addplot [line width=1.08pt, color2, forget plot]
table {%
0 5.57276264313396e+20
0 7.69369514077068e+20
};
\addplot [line width=1.08pt, color2, forget plot]
table {%
1 3.80362208721171e+21
1 4.15122998761899e+21
};
\addplot [line width=1.08pt, color2, forget plot]
table {%
2 5.19483834183389e+21
2 5.28256924148562e+21
};
\addplot [line width=1.08pt, color2, forget plot]
table {%
3 5.45680016263916e+21
3 5.48173701325721e+21
};
\addplot [line width=1.08pt, color2, forget plot]
table {%
4 5.51809407857401e+21
4 5.52148351012084e+21
};
\addplot [draw=color3, fill=color3, mark=*, only marks]
table{%
x  y
0 0.241721074328827
1 0.474258338918389
2 0.657452155513072
3 0.778473977028639
4 0.852789928283214
};
\addlegendentry{$\exp(-{\bf L}_U / \log n)$}
\addplot [line width=1.08pt, color3, forget plot]
table {%
0 0.241721074328827
1 0.474258338918389
2 0.657452155513072
3 0.778473977028639
4 0.852789928283214
};
\addplot [line width=1.08pt, color3, forget plot]
table {%
0 0.234890032386545
0 0.248958225624746
};
\addplot [line width=1.08pt, color3, forget plot]
table {%
1 0.464762498991842
1 0.483690059800956
};
\addplot [line width=1.08pt, color3, forget plot]
table {%
2 0.649284906607309
2 0.667483400486463
};
\addplot [line width=1.08pt, color3, forget plot]
table {%
3 0.770435617728909
3 0.786916296042255
};
\addplot [line width=1.08pt, color3, forget plot]
table {%
4 0.844309394018802
4 0.861863625339757
};
\end{axis}

\end{tikzpicture}}~\resizebox{.47\linewidth}{!}{
\begin{tikzpicture}

\definecolor{color0}{rgb}{0.12156862745098,0.466666666666667,0.705882352941177}
\definecolor{color1}{rgb}{1,0.498039215686275,0.0549019607843137}
\definecolor{color2}{rgb}{0.172549019607843,0.627450980392157,0.172549019607843}
\definecolor{color3}{rgb}{0.83921568627451,0.152941176470588,0.156862745098039}

\begin{axis}[
axis line style={white!80!black},
legend cell align={left},
legend style={
  fill opacity=0.8,
  draw opacity=1,
  text opacity=1,
  at={(0.975,0.475)},
  anchor=east,
  draw=white!80!black,
  font = \tiny
},
log basis y={10},
tick align=inside,
x grid style={white!80!black},
xlabel={\footnotesize Rewiring ratio \(\displaystyle p_{\sf re}\)},
xmajorgrids,
xmajorticks=true,
xmin=-0.5, xmax=4.5,
xminorgrids,
xtick style={color=white!15!black},
xtick={0,1,2,3,4},
xticklabels={0.01,0.05,0.1,0.15,0.2},
xtick pos=left,
y grid style={white!80!black},
ymajorgrids,
ymajorticks=true,
yminorgrids,
ymode=log,
ytick style={color=white!15!black},
ytick pos=left,
width=4.5cm,
height=4.5cm
]
\addplot [draw=color2, fill=color2, mark=*, only marks]
table{%
x  y
0 2.11266446865847
1 20.7267131948314
2 41.6284185757132
3 59.3987295661545
4 74.4902245785305
};
\addlegendentry{$\exp(0.1 {\bf L}_U / \log n )$}
\addplot [line width=1.08pt, color2, forget plot]
table {%
0 2.11266446865847
1 20.7267131948314
2 41.6284185757132
3 59.3987295661545
4 74.4902245785305
};
\addplot [line width=1.08pt, color2, forget plot]
table {%
0 1.80933031511192
0 2.4210596109125
};
\addplot [line width=1.08pt, color2, forget plot]
table {%
1 19.3037065753549
1 22.0661933213993
};
\addplot [line width=1.08pt, color2, forget plot]
table {%
2 39.8861757359869
2 43.4605313198489
};
\addplot [line width=1.08pt, color2, forget plot]
table {%
3 57.3755409219987
3 61.4210024712754
};
\addplot [line width=1.08pt, color2, forget plot]
table {%
4 72.8322237953422
4 76.0460186082606
};
\addplot [draw=color3, fill=color3, mark=*, only marks]
table{%
x  y
0 0.0412759157143382
1 0.107493729703908
2 0.196538128460606
3 0.266438832887954
4 0.337550695394556
};
\addlegendentry{$\exp(-0.1 {\bf L}_U / \log n)$}
\addplot [line width=1.08pt, color3, forget plot]
table {%
0 0.0412759157143382
1 0.107493729703908
2 0.196538128460606
3 0.266438832887954
4 0.337550695394556
};
\addplot [line width=1.08pt, color3, forget plot]
table {%
0 0.0401912489037516
0 0.042405073731613
};
\addplot [line width=1.08pt, color3, forget plot]
table {%
1 0.104248720662779
1 0.11085546187772
};
\addplot [line width=1.08pt, color3, forget plot]
table {%
2 0.191701250716831
2 0.201591311904228
};
\addplot [line width=1.08pt, color3, forget plot]
table {%
3 0.260867089491712
3 0.272713165448456
};
\addplot [line width=1.08pt, color3, forget plot]
table {%
4 0.331937658644394
4 0.343454759756389
};
\end{axis}

\end{tikzpicture}}}\vspace{-.2cm}
    \caption{Comparing the averaged graph filter distance $\mathbb{D}_{\cal H} ( {\bf L}_U, \hat{\bf L}_U )$ against rewiring ratio $p_{\sf re}$ over $100$ trials for different graph filters on the \texttt{email-Eu-core} dataset. Note that the red (resp.~green) curves correspond to low pass (resp.~high pass) graph filters.
    }\vspace{-.3cm}
    \label{fig:D_p_real}
\end{figure}
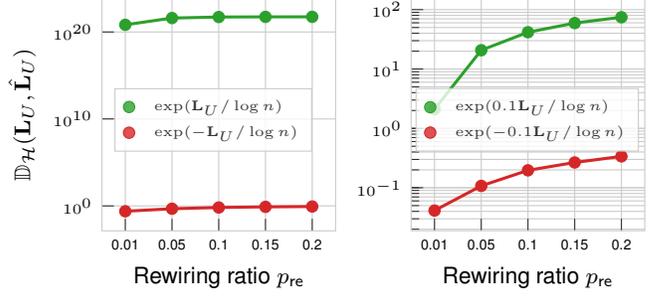

The simulation results can be found in Fig.~\ref{fig:D_n}. In the figure, we compare the averaged graph filter distance for the four settings in Table~\ref{tab:filtersettings}. Particularly, we consider three rewiring ratios $p_{\sf re} \in \{ 0.1, 0.5, 0.9 \}$ and make the following observations in order. 
In the case when unnormalized Laplacian is taken as the GSO, we observe that the low pass graph filters stabilize with ${\cal D}_{ {\cal H}_{\sf LP} }( {\bf L}_U, \hat{\bf L}_U ) \rightarrow 0$ as $n \rightarrow \infty$; in contrast, the high pass graph filters become unstable as the graph filter distance increases rapidly with $n$, note that ${\cal D}_{ {\cal H}_{\sf HP} }( {\bf L}_U, \hat{\bf L}_U ) \approx 10^5$.
In the case when normalized Laplacian is taken as the GSO, both low pass and high pass graph filters stabilize as $n \rightarrow \infty$. As observed, the convergence rate of graph filter distance with low pass filters is slightly faster than the high pass filters.

The above results agree with Section~\ref{sec:results} as the low pass graph filters stabilize as $n \rightarrow \infty$. Moreover, our observations seem to suggest that stability with large number of rewires does not hold for high pass filter with ${\bf L}_U$ as GSO. On the other hand, for high pass graph filters with ${\bf L}_{\sf norm}$, we suspect that the convergence of ${\cal D}_{\cal H_{\sf HP}} ( {\bf L}_{\sf norm}, \hat{\bf L}_{\sf norm} )$ is due to the discussions in \eqref{eq:stable_kenlay}, where the facts $\| {\bf L}_{\sf norm} \|_2 \leq 2$ and $\| {\bf L}_{\sf norm} - \hat{\bf L}_{\sf norm} \|_2 \rightarrow 0$ suffice to ensure stability.\vspace{.2cm}

\noindent \textbf{Real Data Experiment.} Our last example considers the stability of graph filters on real graph topology. We consider ${\cal G}$ as the \texttt{email-Eu-core} network \cite{leskovec}. The latter is an email network containing $1,005$ nodes with $25,571$ edges and $42$ communities representing the departmental membership of researchers. To test the stability of a graph filter defined on ${\cal G}$, we perform a simplified edge rewiring process: for each inter and intra-cluster block, {\sf (i)} we delete a portion $p_{\sf re} \in [0,1]$ of edges selected uniformly at random, then {\sf (ii)} we add back the same number of edges into the respective block, again selected uniformly at random. 

Fig.~\ref{fig:D_p_real} shows the graph filter distance against the rewiring ratio $p_{\sf re}$ for two pairs of low pass and high pass graph filters with unnormalized Laplacians as GSO. In both cases, we observe that low pass filters are stable over the considered range of rewiring ratios. Especially with a set of `weaker' low/high pass filter [cf.~right panel with ${\cal H}( {\bf L}_U ) = \exp( \pm 0.1 {\bf L}_U / \log n )$], we observe that $\mathbb{D}_{\cal H}( {\bf L}_U, \hat{\bf L}_U )$ increases at slower rate for the low pass filter. 
\vspace{.1cm}

\noindent \textbf{Conclusions.}
We study the stability of low pass graph filters when subjected to a large number of edge rewires. We propose a new stability bound given with respect to the frequency response of filters. Our bound shows that the stability property for low pass graph filters hinges on whether there are changes to the community structure in the perturbation. Numerical experiments validate our theories.

\newpage

\bibliographystyle{IEEEtran}
\bibliography{ref}


\end{document}